\documentclass[11pt]{article}

\usepackage{amsmath, amssymb, amsthm}
\usepackage{mathtools}
\usepackage{bm}
\usepackage{graphicx}
\usepackage{hyperref}
\usepackage{enumitem}
\usepackage{thmtools}
\usepackage{thm-restate}
\usepackage{booktabs}
\usepackage{tikz}
\usepackage{float}
\usepackage{multirow}

\theoremstyle{plain}
\newtheorem{theorem}{Theorem}[section]
\newtheorem{lemma}[theorem]{Lemma}
\newtheorem{proposition}[theorem]{Proposition}
\newtheorem{corollary}[theorem]{Corollary}

\theoremstyle{definition}
\newtheorem{definition}[theorem]{Definition}
\newtheorem{example}{Example}[section]

\theoremstyle{remark}

\usetikzlibrary{arrows.meta, positioning}

\hypersetup{
    colorlinks=true,
    linkcolor=blue,
    citecolor=blue,
    urlcolor=blue
}

\title{A Continuous Multi-Component Measure of Directed Acyclicity (DAG-ness)}
\author{\normalsize Erik Csikos$^{1,2}$\\
\small $^1$Binghamton University, Binghamton, United States\\
\small $^2$Moravian University, Bethlehem, United States\\}

\date{}

\begin{document}

\maketitle
\begin{abstract}
Directed acyclic graphs (DAGs) are fundamental to the study of causal structures, hierarchical systems, and information flow. While directedness and acyclicity are defined as binary properties, real--world networks often exhibit continuous degrees of "DAG-ness" due to structural noise, back-edges, or localized feedback loops. Our previous attempt to quantify DAG-ness as a continuous measure suffered from topological redundancy, where overlapping cyclic penalties artificially deflated scores for networks with minor feedback. In this paper, we resolve these limitations by introducing a strictly orthogonal, 4-dimensional continuous DAG-ness framework. By independently measuring the volume of feedback $A(G)$, the alignment of flow $F(G)$, the macroscopic locality of feedback $M(G)$, and dynamical pathway complexity $S(G)$, the proposed measure eliminates collinearity and the "Dilution Trap." Empirical evaluation on synthetic diagnostic graphs demonstrates enhanced mathematical stability, while deterministic application to classical number--theoretic systems (the Kaprekar and Collatz graphs) confirms the framework's ability to rigorously isolate topological flow from dynamical entrapment. The resulting composite score $D(G)$ provides a highly scalable, interpretable, and mathematically sound metric for structural network analysis.
\end{abstract}

\section{Introduction}

The directed acyclic graph (DAG) is a ubiquitous structural motif in network science, serving as the mathematical foundation for causal inference, software dependency mapping, biological food webs, and organizational hierarchies. In pure graph theory, a DAG is defined by a strict binary condition: the complete absence of directed cycles. However, when analyzing complex, real--world systems, this binary classification is often highly brittle. A massive, perfectly ordered hierarchical network containing millions of nodes can be instantly classified as a non-DAG by the introduction of a single, localized feedback loop or noisy symmetric edge. 

Consequently, there is a significant need for a continuous measure of ``DAG-ness'', that is, a metric that can quantify exactly how closely a given directed network resembles a pure topological DAG on a scale from $0$ to $1$. 

Previous work by the author introduced a foundational 5-component framework designed to capture various facets of structural hierarchy and acyclicity \cite{csikos2026}. This model served as a vital conceptual precursor, successfully demonstrating that DAG-ness could be evaluated continuously. However, because it relied heavily on strongly connected component (SCC) surrogates and trace--exponential matrices \cite{eades, zheng}, it exhibited unintended mathematical collinearity. A single localized cycle could trigger simultaneous penalties across node-level, macro--level, and hierarchical dimensions. Furthermore, the reliance on continuous algebraic walks exposed the initial framework to severe score degradation when evaluating simple undirected trees.

In this paper, we present a refined, strictly orthogonal 4-dimensional framework that builds upon this conceptual foundation. By decoupling the static topological structure of the graph from its dynamic spectral flow, the revised measure eliminates component redundancy while preserving the empirical validity of the original approach.

The remainder of the paper develops this refined perspective in a structured progression, beginning with the mathematical foundations of directed acyclicity. The structure of the paper is as follows. Section 2
 establishes the structural, algebraic, and spectral foundations of directed acyclicity. Section 3 introduces the refined four‑component DAG‑ness framework and formalizes each component. Section 4
 provides a theoretical justification of the framework, demonstrating how it eliminates redundancy and resolves the limitations of earlier formulations. Section 5 analyzes the behavior of the measure on number‑theoretic dynamical systems, and Section 6 presents empirical validation on synthetic diagnostic graph families.

\section{Mathematical Foundations of Directed Acyclicity}

Directed acyclic graphs (DAGs) arise naturally in settings where information,
influence, or computation flows in a consistent direction. Their defining
properties, directionality and the absence of directed cycles, support
hierarchical organization, causal inference, and efficient algorithmic processing
\cite{diestel, bangjensen-gutin}. In this section, we develop the mathematical
foundations of directed acyclicity, establishing the structural concepts that
motivate a continuous measure of DAG-like behavior.

\subsection{Topological and Combinatorial Foundations}

\begin{definition}[Directed Graph]
A \emph{directed graph} (or \emph{digraph}) is a pair $G = (V,E)$ where $V$ is a
finite set of vertices and $E \subseteq V \times V$ is a set of directed edges.
\end{definition}

\begin{definition}[Walks and Cycles]
A \emph{directed walk} of length $k$ is a sequence $(v_0, v_1, \dots, v_k)$ such
that $(v_{i-1}, v_i) \in E$ for all $i$. A walk is a \emph{cycle} if $v_0 = v_k$
and $k \ge 1$.
\end{definition}

\begin{definition}[Reachability]
For $u,v \in V$, we write $u \leadsto v$ if there exists a directed walk from $u$
to $v$. The relation $\leadsto$ is called the \emph{reachability relation}.
\end{definition}

\begin{proposition}
The reachability relation $\leadsto$ is a preorder on $V$: it is reflexive and
transitive.
\end{proposition}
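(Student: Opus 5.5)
The plan is to verify the two preorder axioms directly from the Definition of walks. The only subtlety is that reflexivity needs a walk from a vertex to itself, and whether that exists depends on whether length-$0$ walks are allowed.

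\emph{Reflexivity.} For $v \in V$, I take the trivial sequence $(v_0)$ with $v_0 = v$ and $k = 0$. The condition ``$(v_{i-1},v_i)\in E$ for all $i$'' ranges over $1 \le i \le 0$, so it holds vacuously. Hence $(v)$ is a directed walk of length $0$ from $v$ to $v$, and $v \leadsto v$.

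This is the step I expect to be the main obstacle, because it is a matter of convention. The definition of a \emph{cycle} explicitly demands $k \ge 1$, whereas the definition of a walk imposes no such lower bound. I will therefore read walks as allowing $k=0$. If one insisted on $k \ge 1$, reflexivity would fail at every vertex lying on no cycle, so in particular in every DAG. The proof should state the convention $k \ge 0$ explicitly rather than leave it implicit.

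\emph{Transitivity.} Suppose $u \leadsto v$ and $v \leadsto w$. Then there are walks $(u=a_0,\dots,a_k=v)$ and $(v=b_0,\dots,b_m=w)$. I form the concatenation
\[
(a_0,\dots,a_k=b_0,b_1,\dots,b_m),
\]
which has length $k+m$. Every consecutive pair in this sequence is either some $(a_{i-1},a_i)$ or some $(b_{j-1},b_j)$. The junction causes no problem, because $a_k$ and $b_0$ are the same vertex $v$ and so no new pair is introduced there. Hence every consecutive pair lies in $E$, the concatenation is a directed walk from $u$ to $w$, and $u \leadsto w$. This completes the argument.

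I would also remark that $\leadsto$ is generally not antisymmetric, since two vertices on a common cycle reach each other. This is why it is only a preorder, and it motivates the later use of strongly connected components.
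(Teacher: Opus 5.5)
Your proof is correct and is the paper's argument: the trivial length-$0$ walk $(v)$ gives reflexivity, and concatenating walks gives transitivity. Your remarks on the $k \ge 0$ convention and on the junction at $v$ simply spell out what the paper's one-line proof leaves implicit.
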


\begin{proof}
Reflexivity follows from the trivial walk $(v)$ of length $0$. Transitivity
follows by concatenating walks. \end{proof}

The failure of antisymmetry in $\leadsto$ corresponds precisely to the presence
of directed cycles.

\begin{definition}[Strongly Connected Components]
A subset $S \subseteq V$ is a \emph{strongly connected component} (SCC) if every
pair of vertices in $S$ is mutually reachable and $S$ is maximal with respect to
this property.
\end{definition}

The SCCs of a digraph partition the vertex set, and their structure is fundamental
to understanding directed cycles \cite{bangjensen-gutin}. The condensation graph
$C(G)$, introduced in classical digraph theory \cite{bangjensen-gutin, tarjan1972},
encodes the global feedback structure of $G$.

\begin{definition}[Condensation DAG]
Let $G = (V,E)$ be a digraph with SCCs $S_1, \dots, S_k$. The \emph{condensation
graph} $C(G)$ is the directed graph whose vertices correspond to the SCCs and
with an edge $S_i \to S_j$ if there exists $(u,v) \in E$ with $u \in S_i$ and
$v \in S_j$.
\end{definition}

\begin{theorem}
The condensation graph $C(G)$ is always a DAG \cite{bangjensen-gutin}.
\end{theorem}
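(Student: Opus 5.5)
We argue by contradiction, using only the reachability preorder from the earlier proposition and the maximality in the definition of strongly connected components. Suppose $C(G)$ contains a directed cycle $S_{i_0} \to S_{i_1} \to \cdots \to S_{i_k} = S_{i_0}$ with $k \ge 1$. We may first discard self-loops. The definition of $C(G)$ as stated would allow an edge $S_i \to S_i$ whenever some $(u,v)\in E$ has both endpoints in $S_i$. We adopt the standard convention that the condensation only records edges between \emph{distinct} components, i.e.\ $i \neq j$. With that convention, any cycle in $C(G)$ passes through at least two distinct SCCs, say $S_{i_0} \neq S_{i_1}$.

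The key step is to lift each edge of the cycle to reachability in $G$. For each edge $S_{i_{t-1}} \to S_{i_t}$ there are vertices $u_t \in S_{i_{t-1}}$ and $v_t \in S_{i_t}$ with $(u_t,v_t)\in E$, so $u_t \leadsto v_t$. Inside each SCC all vertices are mutually reachable, so $v_t \leadsto u_{t+1}$, since both lie in $S_{i_t}$. Chaining these relations by transitivity of $\leadsto$ (the earlier proposition) gives, for any $x \in S_{i_0}$ and $y \in S_{i_1}$, both $x \leadsto y$ (go along the first edge) and $y \leadsto x$ (go around the rest of the cycle back to $S_{i_0}$).

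Hence every vertex of $S_{i_0}\cup S_{i_1}$ is mutually reachable with every other. Here transitivity is used again to handle pairs inside the union: mutual reachability is an equivalence relation, being the symmetric part of a preorder. So $S_{i_0}\cup S_{i_1}$ is a strongly connected set strictly containing $S_{i_0}$. This contradicts the maximality of $S_{i_0}$, so $C(G)$ has no directed cycle and is a DAG.

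The only real obstacle is bookkeeping rather than mathematics. First, the self-loop issue must be addressed explicitly, since the literal definition would admit loops $S_i\to S_i$. Second, we must make sure that "SCCs partition $V$" is consistent with the maximality argument. Both points are handled by noting that mutual reachability is an equivalence relation whose classes are exactly the SCCs.
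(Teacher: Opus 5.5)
Your proof is correct and follows the same route as the paper: lift the cycle in $C(G)$ to mutual reachability in $G$, then contradict the maximality of the SCCs. Your explicit convention that $C(G)$ records only edges $S_i \to S_j$ with $i \neq j$ closes a gap in the paper's one-line proof. Under the paper's literal definition, any SCC containing an internal edge yields a self-loop in $C(G)$, and for such a length-one cycle the paper's step ``all vertices in the cycle belong to a single SCC'' gives no contradiction.
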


\begin{proof}
Suppose $C(G)$ contains a directed cycle $S_{i_1} \to S_{i_2} \to \cdots \to
S_{i_m} \to S_{i_1}$. Then for any $u \in S_{i_1}$ and $v \in S_{i_j}$, the
edges in the cycle imply $u \leadsto v$ and $v \leadsto u$, so all vertices in
the cycle belong to a single SCC, contradicting maximality. \end{proof}

Thus, a directed graph is acyclic if and only if all SCCs are trivial.

\begin{definition}[Partial Order]
A relation $\preceq$ on $V$ is a \emph{partial order} if it is reflexive,
antisymmetric, and transitive.
\end{definition}

\begin{theorem}[DAGs and Partial Orders]
A directed graph $G$ is acyclic if and only if the reachability relation
$\leadsto$ is a partial order.
\end{theorem}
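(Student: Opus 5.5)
Since the preceding proposition already shows that $\leadsto$ is reflexive and transitive, the theorem reduces to showing that $G$ is acyclic if and only if $\leadsto$ is antisymmetric. I will prove the two directions separately, each by contraposition. Throughout I work with the paper's conventions: a cycle is a closed walk of length $k \ge 1$, so a self-loop $(v,v) \in E$ counts as a cycle.

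First I show that a cycle forces a failure of antisymmetry. Suppose $G$ contains a cycle $(v_0, v_1, \dots, v_k)$ with $v_0 = v_k$ and $k \ge 1$. If some two consecutive vertices differ, say $v_i \neq v_{i+1}$, then the edge $(v_i,v_{i+1})$ gives $v_i \leadsto v_{i+1}$. The remainder of the cycle, $(v_{i+1}, \dots, v_k = v_0, \dots, v_i)$, gives $v_{i+1} \leadsto v_i$. This contradicts antisymmetry. The remaining case is that every step of the cycle repeats the same vertex, which means the cycle is a self-loop.

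The self-loop is the main obstacle. A self-loop is a cycle under the paper's definition, yet reachability on a single vertex is already reflexive, so a self-loop produces no violation of antisymmetry. Hence the statement is literally false for graphs with loops. I would resolve this by adopting the standing convention that $E$ contains no loops, that is, $E \subseteq \{(u,v) \in V \times V : u \neq v\}$, and I would state this assumption explicitly. Alternatively, one could phrase the theorem as ``acyclic apart from loops.'' Under the no-loop convention, every cycle has $v_0 \neq v_1$, and the argument above goes through.

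Next I show that a failure of antisymmetry forces a cycle. Suppose $u \leadsto v$ and $v \leadsto u$ with $u \neq v$. Let $P$ be a walk from $u$ to $v$ and $Q$ a walk from $v$ to $u$. Because $u \neq v$, both walks have length at least $1$. Their concatenation $PQ$ is a closed walk from $u$ back to $u$ of length at least $2 \ge 1$, so it is a cycle, and $G$ is not acyclic. Here cycles are defined as closed walks rather than simple cycles, so no extraction of a simple cycle is needed; if desired, a shortest closed subwalk would give a simple cycle.

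Combining the two directions with the preceding proposition proves the theorem. As a cross-check, the result can also be phrased through the earlier remark that $G$ is acyclic if and only if all SCCs are trivial. Mutual reachability of distinct vertices is exactly the existence of a nontrivial SCC. This matches the argument above, but only when singleton SCCs carrying loops are excluded, which again confirms that the no-loop convention is needed.
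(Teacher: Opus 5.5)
Your argument follows the same route as the paper's: reduce to antisymmetry via the preceding proposition, then note that mutual reachability of distinct vertices concatenates into a closed walk, and that a cycle yields mutual reachability along it. Your self-loop caveat is correct and improves on the paper. Under the paper's own conventions ($E \subseteq V \times V$, cycles of length $k \ge 1$, and the Kaprekar fixed point counted as a $1$-cycle), a graph whose only cycles are loops has antisymmetric reachability, so the paper's step ``a cycle would imply mutual reachability'' fails there and your explicit no-loop assumption is genuinely needed.
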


\begin{proof}
If $G$ is acyclic, then $\leadsto$ is antisymmetric: if $u \leadsto v$ and
$v \leadsto u$, then $u$ and $v$ lie on a directed cycle, contradicting
acyclicity. Conversely, if $\leadsto$ is antisymmetric, then no directed cycle
can exist, since a cycle would imply mutual reachability. \end{proof}

This correspondence allows DAGs to be viewed as combinatorial realizations of
finite partially ordered sets.

A permutation $\pi$ of $V$ is a \emph{topological ordering} if every edge
$(u,v) \in E$ satisfies $\pi(u) < \pi(v)$. The existence of such an ordering is
equivalent to acyclicity, a classical result due to Kahn \cite{kahn1962} and
Tarjan \cite{tarjan1972}.

\begin{theorem}
A directed graph $G$ is acyclic if and only if it admits a topological ordering
\cite{kahn1962, tarjan1972}.
\end{theorem}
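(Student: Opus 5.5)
The plan is to prove the two directions separately, with the easy direction first.

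\emph{Topological ordering implies acyclic.} Suppose $\pi$ is a topological ordering, and suppose for contradiction that $(v_0, v_1, \dots, v_k)$ is a cycle with $k \ge 1$ and $v_0 = v_k$. Each edge $(v_{i-1}, v_i)$ gives $\pi(v_{i-1}) < \pi(v_i)$. Chaining these inequalities yields $\pi(v_0) < \pi(v_k) = \pi(v_0)$, which is absurd. This direction is immediate.

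\emph{Acyclic implies a topological ordering.} I will argue by induction on $|V|$, following Kahn's idea of repeatedly removing a source.

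The key lemma is that every nonempty finite acyclic digraph has a vertex of in-degree zero. To prove it, suppose every vertex has an in-neighbour. Starting from any vertex $w_0$, choose $w_1$ with $(w_1, w_0) \in E$, then $w_2$ with $(w_2, w_1) \in E$, and so on. Because $V$ is finite, some vertex repeats, say $w_i = w_j$ with $i < j$. Then $(w_j, w_{j-1}, \dots, w_i)$ is a directed walk of length $j - i \ge 1$ from a vertex back to itself, which is a cycle and contradicts acyclicity. Since the paper defines cycles as closed walks, no further extraction of a simple cycle is needed.

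With the lemma in hand, the induction goes as follows. Pick a source $s$ and set $\pi(s) = 1$. The induced subgraph $G - s$ is still acyclic, because any cycle in $G - s$ would also be a cycle in $G$. By the induction hypothesis, $G - s$ has a topological ordering $\pi'$ onto $\{1, \dots, |V|-1\}$, and we set $\pi(v) = \pi'(v) + 1$ for $v \ne s$. To check that $\pi$ is a topological ordering:
\begin{itemize}
\item an edge not touching $s$ is respected because $\pi'$ respects it;
\item an edge touching $s$ must be of the form $(s, v)$, since $s$ has no incoming edges, and it is respected because $\pi(s) = 1 < \pi(v)$.
\end{itemize}
The base case $|V| = 0$ (or $|V| = 1$) is trivial.

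As an alternative route for this direction, one could use the earlier theorem that $\leadsto$ is a partial order when $G$ is acyclic, and take any linear extension of it. Linear extensions of finite posets exist by the same minimal-element induction, so this would only repackage the argument above.

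The main obstacle is the source-existence lemma. It is the only place where finiteness and acyclicity genuinely interact, and the backward-walk pigeonhole argument described above is how I would handle it.
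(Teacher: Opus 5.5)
Your proof is correct. The paper gives no argument of its own for this theorem and simply defers to Kahn and Tarjan; your induction that repeatedly deletes a source, with the backward-walk pigeonhole lemma guaranteeing that a source exists, is Kahn's source-removal method written as an induction, so it follows the cited route (the other citation, Tarjan's depth-first search ordering by decreasing finishing time, is not needed).
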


Violations of a topological order provide a natural way to quantify the degree to
which a graph deviates from acyclicity. This idea underlies several hierarchy
measures \cite{mones, moutsinas-evans}.

\subsection{Algebraic and Spectral Characterizations}

Let $A$ be the adjacency matrix of $G$. The powers of $A$ encode walk counts, a
standard result in algebraic graph theory \cite{diestel, bangjensen-gutin,
horn-johnson}.

\begin{proposition}
For all $k \ge 1$, the entry $(A^k)_{ij}$ equals the number of directed walks of
length $k$ from vertex $i$ to vertex $j$.
\end{proposition}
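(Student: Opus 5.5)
We argue by induction on $k$, working from the entrywise definition of matrix multiplication. Throughout, $A=(a_{ij})$ with $a_{ij}=1$ if $(i,j)\in E$ and $a_{ij}=0$ otherwise. Since $E\subseteq V\times V$ is a set, there are no parallel edges, so every entry is $0$ or $1$.

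For the base case $k=1$, a directed walk of length $1$ from $i$ to $j$ is a sequence $(i,j)$ with $(i,j)\in E$. There is exactly one such walk if $(i,j)\in E$ and none otherwise. This count equals $a_{ij}=(A^1)_{ij}$.

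For the inductive step, assume $(A^k)_{i\ell}$ counts the walks of length $k$ from $i$ to $\ell$ for every pair $i,\ell$. Write
\[
(A^{k+1})_{ij}=\sum_{\ell\in V}(A^k)_{i\ell}\,a_{\ell j}.
\]
We then set up a bijection. Given a walk $(v_0,\dots,v_{k+1})$ from $i$ to $j$, we record its penultimate vertex $\ell=v_k$ and its prefix $(v_0,\dots,v_k)$. The prefix is a walk of length $k$ from $i$ to $\ell$, and the final step satisfies $(\ell,j)\in E$. Conversely, any walk of length $k$ from $i$ to some $\ell$ with $(\ell,j)\in E$ extends uniquely to a walk of length $k+1$ ending at $j$. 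These two maps are mutually inverse.

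It follows that the walks of length $k+1$ from $i$ to $j$ split into disjoint classes indexed by $\ell$. The class for $\ell$ has size $(A^k)_{i\ell}$ when $(\ell,j)\in E$ and size $0$ otherwise, which is exactly $(A^k)_{i\ell}a_{\ell j}$. Summing over $\ell$ gives the claimed formula.

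The only step needing care is that the decomposition by penultimate vertex is a genuine partition. The classes are disjoint because each walk has a unique $v_k$. They are exhaustive because every walk of length $k+1\ge 1$ has a penultimate vertex. Neither point is a real obstacle. The argument does not require walks to be simple, which is consistent with the paper's definition of walks allowing repeated vertices.
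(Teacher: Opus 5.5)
Your proof is correct and follows the same route as the paper, which just says the result follows by induction on $k$ via matrix multiplication. Your partition of length-$(k+1)$ walks by their penultimate vertex supplies the details the paper leaves to its cited reference.
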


\begin{proof}
The result follows by induction on $k$ using matrix multiplication
\cite{horn-johnson}. \end{proof}

The trace of $A^k$ counts closed walks of length $k$, a classical fact in matrix
analysis \cite{horn-johnson, meyer2000}.

\begin{theorem}[Cycle Detection via Matrix Powers]
A directed graph $G$ contains a directed cycle if and only if
$\operatorname{trace}(A^k) > 0$ for some $1 \le k \le n$
\cite{horn-johnson, meyer2000}.
\end{theorem}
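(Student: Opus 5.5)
The plan is to prove both directions using the walk-counting proposition, which says $(A^k)_{ij}$ counts directed walks of length $k$ from $i$ to $j$. Taking $i=j$ and summing over vertices, $\operatorname{trace}(A^k)=\sum_i (A^k)_{ii}$ is the number of closed walks of length $k$, counted with a distinguished starting vertex. Since $A$ has nonnegative integer entries, every $(A^k)_{ii}$ is nonnegative. Hence $\operatorname{trace}(A^k)>0$ holds exactly when at least one closed walk of length $k$ exists.

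For the ``if'' direction, suppose $\operatorname{trace}(A^k)>0$ for some $1\le k\le n$. Then some diagonal entry $(A^k)_{ii}$ is positive, so there is a walk $(v_0,\dots,v_k)$ with $v_0=v_k=i$ and $k\ge 1$. This is a cycle in the sense of the paper's definition of walks and cycles, which only requires a closed walk of length at least one. No extraction of a simple cycle is needed.

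For the ``only if'' direction, suppose $G$ contains a cycle, that is, a closed walk of some length $m\ge 1$. The point is to get a closed walk whose length is at most $n$. I would take a closed walk $(v_0,\dots,v_m)$ of minimal length $m\ge1$.
\begin{itemize}
\item If $m>n$, the $m$ vertices $v_0,\dots,v_{m-1}$ cannot all be distinct, by pigeonhole. So $v_a=v_b$ for some $0\le a<b\le m-1$.
\item Then $(v_a,\dots,v_b)$ is a closed walk of length $b-a$, with $1\le b-a<m$. This contradicts minimality.
\end{itemize}
Hence $m\le n$. Then $(A^m)_{v_0v_0}\ge 1$, and since all diagonal entries are nonnegative, $\operatorname{trace}(A^m)\ge 1>0$.

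The only real obstacle is this length-reduction step: an arbitrary cycle in the paper's sense may be longer than $n$, so it must be shortened. The pigeonhole and minimal-counterexample argument handles this cleanly. I would also note that self-loops are covered as the case $m=1$, giving $\operatorname{trace}(A)>0$.
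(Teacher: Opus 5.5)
Your proof is correct and complete. Nonnegativity of the entries of $A^k$ makes $\operatorname{trace}(A^k)>0$ equivalent to the existence of a closed walk of length $k$, and the ``if'' direction then follows directly from the paper's definition of a cycle as a closed walk of length at least one. Your minimal-length/pigeonhole step correctly supplies the bound $m\le n$ needed for the ``only if'' direction.

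The paper gives no proof of its own for this theorem and only cites Horn--Johnson and Meyer. Your argument is the standard one behind those citations, built only from the paper's walk-counting proposition.
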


\begin{definition}[Nilpotent Matrix]
A matrix $A$ is \emph{nilpotent} if $A^m = 0$ for some integer $m \ge 1$.
\end{definition}

Nilpotency of the adjacency matrix is equivalent to acyclicity, a well-known
characterization in algebraic graph theory \cite{horn-johnson, meyer2000}.

\begin{theorem}[Acyclicity and Nilpotency]
A directed graph $G$ is acyclic if and only if its adjacency matrix $A$ is
nilpotent.
\end{theorem}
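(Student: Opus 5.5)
The plan is to prove both directions through the walk-counting interpretation of matrix powers: the earlier Proposition says that $(A^k)_{ij}$ counts directed walks of length $k$ from $i$ to $j$. Nilpotency, $A^m = 0$, then says exactly that there are no walks of length $m$ at all. So the task reduces to a purely combinatorial statement: $G$ is acyclic if and only if there is a uniform bound on the length of directed walks.

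\emph{Acyclic implies nilpotent.} Suppose $G$ is acyclic and let $(v_0,\dots,v_k)$ be any directed walk. I claim its vertices are pairwise distinct. If $v_i = v_j$ with $i<j$, then the segment $(v_i,\dots,v_j)$ is a closed walk of length $j-i\ge 1$, which is a cycle in the sense of the paper's definition. That contradicts acyclicity. A walk of length $k$ has $k+1$ vertices, so distinctness forces $k+1 \le n$. Hence no walk of length $n$ exists, every entry of $A^n$ is $0$ by the walk-count Proposition, and $A^n = 0$.

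As an alternative route, I could use the topological ordering theorem (Kahn/Tarjan). Relabeling the vertices by a topological order makes $A$ strictly upper triangular, and a strictly upper triangular $n\times n$ matrix satisfies $A^n=0$. This is a cleaner algebraic argument, and I would mention it as a remark.

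\emph{Nilpotent implies acyclic.} I argue by contrapositive. Suppose $G$ contains a cycle $(v_0,\dots,v_k=v_0)$ with $k\ge1$. Traversing this cycle $r$ times gives a closed walk of length $rk$ from $v_0$ to itself. Hence $(A^{rk})_{v_0v_0}\ge 1$ for every $r\ge1$. Now suppose $A^m=0$ for some $m$. Then $A^{N}=A^{N-m}A^m=0$ for all $N\ge m$. Choosing $r$ with $rk\ge m$ gives a contradiction.

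Both directions are elementary, so there is no serious obstacle. The one step that needs care is the first direction: the paper's definition of ``cycle'' allows repeated vertices. I therefore need to check explicitly that any repeated vertex in a walk yields a closed subwalk of positive length, which is a cycle under that definition. Extracting a \emph{simple} cycle is unnecessary, so this is immediate. I would also note that nonnegativity of the entries of $A$ is what makes the entrywise walk-count argument valid.
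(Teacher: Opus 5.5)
Your proof is correct. The forward direction matches the paper's: acyclicity bounds walk lengths by $n-1$, so $A^n=0$ by walk counting. You also make explicit a step the paper leaves implicit, namely that a repeated vertex in a walk yields a closed subwalk of positive length, which is a cycle under the paper's definition.

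The converse is where you differ. The paper argues that $A^k=0$ gives $\operatorname{trace}(A^k)=0$, and then appeals to the cited trace criterion: a cycle exists iff $\operatorname{trace}(A^j)>0$ for some $1\le j\le n$. You instead argue directly that a cycle of length $k$ produces closed walks of every length $rk$. Hence $(A^{rk})_{v_0v_0}\ge 1$ for all $r$, which is incompatible with $A^N=0$ for all $N\ge m$.

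The paper's route is a one-liner, but it leans on an unproved cited theorem, and as written it needs an extra observation. The vanishing of a single trace does not exclude cycles of other lengths; for instance, a $2$-cycle has $\operatorname{trace}(A)=0$. So one must also note that a nilpotent matrix has all eigenvalues zero, and hence $\operatorname{trace}(A^j)=0$ for every $j\ge 1$. Your argument is self-contained and avoids this issue entirely. It uses only the walk-count proposition and the fact that $A^m=0$ propagates to all higher powers.
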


\begin{proof}
If $G$ is acyclic, then the longest directed walk has length at most $n-1$, so
$A^n = 0$. Conversely, if $A$ is nilpotent, then $A^k = 0$ for some $k$, so
$\operatorname{trace}(A^k) = 0$, implying by the previous theorem that no cycle
exists. \end{proof}

The spectral radius $\rho(A)$ provides a global indicator of recurrence, with
connections to directed cycles established in spectral graph theory
\cite{chung, estrada-recurrence}.

\begin{definition}[Spectral Radius]
For a matrix $A$, the \emph{spectral radius} is
\[
\rho(A) = \max\{ |\lambda| : \lambda \text{ is an eigenvalue of } A \}.
\]
\end{definition}

\begin{lemma}
If $G$ is acyclic, then $\rho(A) = 0$.
\end{lemma}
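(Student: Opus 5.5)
The plan is to combine the nilpotency characterization proved just above with the elementary fact that a nilpotent matrix has only the eigenvalue $0$. Since $G$ is acyclic, the theorem on acyclicity and nilpotency gives an integer $m \ge 1$ with $A^m = 0$; in fact its proof shows $m = n$ works, because every directed walk in an acyclic graph has length at most $n-1$.

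Next, take any eigenvalue $\lambda$ of $A$ over $\mathbb{C}$ with eigenvector $x \neq 0$, so $Ax = \lambda x$. Iterating this relation gives $A^k x = \lambda^k x$ for every $k \ge 1$. Setting $k = m$ yields $0 = A^m x = \lambda^m x$. Since $x \neq 0$, this forces $\lambda^m = 0$, hence $\lambda = 0$.

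Every eigenvalue of $A$ is therefore $0$. The spectrum is nonempty, because the characteristic polynomial of the $n \times n$ matrix $A$ has a complex root when $n \ge 1$. So the maximum in the definition of $\rho(A)$ is taken over the set $\{0\}$, and $\rho(A) = 0$.

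As an alternative route, one could order the vertices by a topological ordering, which exists by the theorem of Kahn and Tarjan. Relabeling vertices in this order is a permutation similarity, under which every edge $(u,v)$ has $\pi(u) < \pi(v)$. The resulting matrix $PAP^{\top}$ is strictly upper triangular, so its eigenvalues are its zero diagonal entries, and similarity preserves the spectrum.

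No step here is a genuine obstacle. The only care needed is to work over $\mathbb{C}$, so that eigenvalues and eigenvectors are guaranteed to exist and the definition of $\rho(A)$ is meaningful.
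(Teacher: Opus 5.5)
Your proposal is correct and matches the paper's proof, which also deduces $\rho(A)=0$ from the nilpotency of $A$. The paper simply asserts that a nilpotent matrix has only zero eigenvalues, a step you verify explicitly via $A^m x = \lambda^m x$; your alternative route through a topological ordering and strict upper triangularity is also valid and mirrors the paper's subsequent lemma.
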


\begin{proof}
If $G$ is acyclic, then $A$ is nilpotent, so all eigenvalues are zero. \end{proof}

\begin{proposition}
If $G$ contains a directed cycle, then $\rho(A) \ge 1$.
\end{proposition}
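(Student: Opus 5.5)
The idea is to compare walk counts with the growth of matrix powers. Since $A$ has nonnegative integer entries, Gelfand's formula $\rho(A) = \lim_{k\to\infty} \|A^k\|^{1/k}$ (valid for any matrix norm) lets us bound $\rho(A)$ from below by exhibiting many walks. Alternatively, we can use monotonicity of the spectral radius for nonnegative matrices (Perron--Frobenius theory): if $0 \le B \le A$ entrywise, then $\rho(B) \le \rho(A)$. I would take the monotonicity route, because it reduces the claim to a single explicit cycle.

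\textbf{Step 1: reduce to a simple cycle.} If $G$ contains a directed cycle $(v_0,\dots,v_k=v_0)$, choose a shortest closed walk of positive length. Its vertices $v_0,\dots,v_{k-1}$ are then distinct, since a repeated vertex would give a shorter closed subwalk. This remains true when $k=1$ (a loop) and $k=2$ (a pair of opposite edges).

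\textbf{Step 2: compute the spectral radius of the cycle matrix.} Let $B$ be the $0/1$ matrix with $B_{v_{i-1}v_i}=1$ for $i=1,\dots,k$ and all other entries zero. Then $0\le B\le A$ entrywise. Restricted to $\{v_0,\dots,v_{k-1}\}$, $B$ is a cyclic permutation matrix, and it is zero elsewhere. Its eigenvalues are the $k$-th roots of unity together with zeros, so $\rho(B)=1$.

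\textbf{Step 3: conclude by monotonicity.} For $0\le B\le A$ we have $0\le B^m\le A^m$ entrywise for every $m$. Hence $\|B^m\|_1\le\|A^m\|_1$ in the entrywise $\ell^1$ (or max-row-sum) norm, which is monotone on nonnegative matrices. Gelfand's formula then gives $\rho(B)\le\rho(A)$.

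A self-contained variant avoids monotonicity altogether. Since $(A^{mk})_{v_0v_0}\ge 1$ for all $m$ by traversing the cycle $m$ times, we get $\|A^{mk}\|\ge 1$ in the max-entry sense. Gelfand's formula then yields $\rho(A)=\lim_m \|A^{mk}\|^{1/(mk)}\ge 1$.

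\textbf{Main obstacle.} Everything above is elementary except the justification of Gelfand's formula, or equivalently of Perron--Frobenius monotonicity, which I would cite from \cite{horn-johnson}.

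If one wanted to avoid spectral-norm limits entirely, there is a trace argument. We have $\operatorname{trace}(A^{mk})=\sum_i\lambda_i^{mk}\ge 1$ for all $m$, because the closed walk at $v_0$ contributes at least $1$. If every $|\lambda_i|<1$, the sum would tend to $0$ as $m\to\infty$, a contradiction. This version uses only the Jordan or Schur form, and it is the one I would actually write out, as it is the cleanest.
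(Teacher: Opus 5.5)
Your proposal is correct. Your Steps 1--3 follow essentially the paper's argument. The paper compares $A$ with the $k\times k$ cyclic permutation matrix of a cycle and invokes the inequality ``spectral radius is at least that of any principal submatrix.'' Your version is more careful on three points that the paper's one-line proof glosses over:
\begin{itemize}
\item You first pass to a \emph{simple} cycle. This step is needed because the paper's definition of a cycle allows repeated vertices.
\item You compare via entrywise domination $0\le B\le A$ with the zero-padded cycle matrix. This handles chords among the cycle's vertices: the principal submatrix on those vertices dominates a permutation matrix but need not equal one.
\item You make explicit that monotonicity rests on nonnegativity. The paper leaves this implicit, although the principal-submatrix inequality fails for general matrices; for example, $\begin{pmatrix}1&1\\-1&-1\end{pmatrix}$ is nilpotent yet has principal submatrix $(1)$.
\end{itemize}

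The trace argument you say you would actually write out is a genuinely different and more elementary route. It needs only two facts: $\operatorname{trace}(A^{mk})=\sum_i\lambda_i^{mk}$ (via Schur form), and the diagonal of $A^{mk}$ consists of nonnegative walk counts, at least one of which is $\ge 1$. Together these give $1\le n\,\rho(A)^{mk}$ for every $m$, hence $\rho(A)\ge 1$. This argument uses neither Perron--Frobenius nor Gelfand's formula. It does not even need your Step 1, since any closed walk of length $k$ suffices. It also ties in naturally with the paper's earlier cycle-detection theorem via $\operatorname{trace}(A^k)$.

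What the monotonicity route buys in exchange is a reusable structural fact: for nonnegative matrices, $\rho$ is monotone under adding edges. The trace argument yields only this particular lower bound.
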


\begin{proof}
A directed cycle of length $k$ contains a $k \times k$ permutation submatrix with
spectral radius $1$. Since the spectral radius of a matrix is at least that of any
principal submatrix, the result follows \cite{horn-johnson}. \end{proof}

\begin{lemma}
If $G$ is acyclic, then there exists a permutation matrix $P$ such that
$P A P^{-1}$ is strictly upper triangular.
\end{lemma}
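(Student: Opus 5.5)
We derive the claim from the topological-ordering theorem of Kahn and Tarjan stated earlier. Since $G$ is acyclic, it admits a topological ordering: a bijection $\pi : V \to \{1,\dots,n\}$ with $\pi(u) < \pi(v)$ for every edge $(u,v)\in E$. We relabel the vertices according to $\pi$ and show that, in the new labelling, every nonzero entry of the adjacency matrix lies strictly above the diagonal.

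First we fix the conventions. Index $A$ by the original labels $1,\dots,n$, so that $A_{ij}=1$ exactly when $(i,j)\in E$. Define the permutation matrix $P$ by $P_{\pi(i),\,i}=1$ and all other entries zero, so that $P e_i = e_{\pi(i)}$. Then $P^{-1}=P^{T}$, and a direct entrywise computation gives
\[
(PAP^{-1})_{\pi(i),\pi(j)} = A_{ij}.
\]
In words, conjugation by $P$ simply renames vertex $i$ as $\pi(i)$. Getting this convention right, so that it is $P A P^{-1}$ and not $P^{-1} A P$ that has the desired form, is the only place where care is needed. If the direction comes out reversed, we replace $\pi$ by $\pi^{-1}$ in the definition of $P$.

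Next we check the triangular form. Let $B = PAP^{-1}$ and suppose $B_{kl}\neq 0$. Write $k=\pi(i)$ and $l=\pi(j)$. Then $A_{ij}\neq 0$, so $(i,j)\in E$, and hence $\pi(i)<\pi(j)$ by the defining property of a topological ordering, that is, $k<l$. Therefore every nonzero entry of $B$ lies strictly above the diagonal. In particular the diagonal entries vanish: a loop $(i,i)$ would be a cycle of length $1$, and it is in any case excluded by $\pi(i)<\pi(i)$ being false. So $B$ is strictly upper triangular, which proves the lemma.

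This also recovers the earlier nilpotency and spectral-radius statements: a strictly upper triangular matrix is nilpotent, and similarity preserves both properties. The main obstacle is purely notational, namely the bookkeeping of the permutation convention. The mathematical content is entirely supplied by the existence of a topological ordering.
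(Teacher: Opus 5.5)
Your proof is correct and takes the same route as the paper: choose a topological ordering $\pi$ and relabel the vertices by it, so that every edge points forward and the conjugated adjacency matrix is strictly upper triangular. Your explicit check that $(PAP^{-1})_{\pi(i),\pi(j)} = A_{ij}$ simply spells out the permutation convention that the paper's one-line proof leaves implicit.
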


\begin{proof}
Let $\pi$ be a topological ordering of $G$. Reordering the vertices according to
$\pi$ produces an adjacency matrix with all edges pointing forward, hence strictly
upper triangular. \end{proof}

\begin{corollary}
A directed graph is acyclic if and only if its adjacency matrix is similar to a
strictly upper triangular matrix.
\end{corollary}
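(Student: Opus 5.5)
The plan is to prove the two directions separately, reusing the lemma on topological reorderings for one direction and the nilpotency theorem for the other.

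For the forward direction, assume $G$ is acyclic. The preceding lemma gives a permutation matrix $P$ such that $PAP^{-1}$ is strictly upper triangular. Since $P$ is invertible, $PAP^{-1}$ is similar to $A$, so this direction needs nothing further. Note that the similarity is in fact a permutation similarity, which is stronger than required.

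For the converse, suppose $A = SUS^{-1}$ with $S$ invertible (possibly complex) and $U$ strictly upper triangular. The key steps, in order:
\begin{enumerate}[label=(\roman*)]
\item Observe that $U$ is nilpotent. Multiplying strictly upper triangular $n\times n$ matrices pushes the nonzero band upward: $U^k$ has $(U^k)_{ij}=0$ whenever $j-i<k$. Hence $U^n=0$, which follows by a short induction on $k$.
\item Conjugate: $A^n = S U^n S^{-1} = 0$, so $A$ is nilpotent.
\item Invoke the theorem on acyclicity and nilpotency to conclude that $G$ is acyclic.
\end{enumerate}
An alternative for the last steps is to note that similar matrices share their characteristic polynomial. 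All eigenvalues of $U$ are its diagonal entries, which are zero, so $\rho(A)=0$. The proposition that a directed cycle forces $\rho(A)\ge 1$ then rules out cycles. I would use the nilpotency route as the main argument and mention the spectral one as a remark.

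The main point requiring care is not computational but definitional. The statement says ``similar'' in the general linear-algebra sense, not ``permutation-similar,'' so the converse must not assume $S$ is a permutation matrix. For this reason the converse argument goes through nilpotency (a similarity invariant) rather than trying to read a topological ordering directly off $S$. Everything else is routine, since the earlier results in the paper supply both the construction and the acyclicity criterion.
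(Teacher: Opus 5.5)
Your proposal is correct and follows the route the paper intends. The paper gives no written proof: it places the corollary right after the permutation-reordering lemma, which is your forward direction, and leaves the converse implicit. Your converse argument (strictly upper triangular $\Rightarrow$ nilpotent, nilpotency is invariant under general similarity, then the acyclicity--nilpotency theorem) is exactly that missing step, and you correctly avoid assuming the similarity is a permutation.
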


Violations of this triangular structure quantify the extent to which a graph
deviates from hierarchical flow, motivating hierarchy-based measures
\cite{mones, moutsinas-evans}.

The results in this section establish several equivalent characterizations of
acyclicity:

\begin{itemize}
    \item combinatorial: absence of directed cycles,
    \item structural: all SCCs are trivial,
    \item order--theoretic: reachability is a partial order,
    \item algebraic: adjacency matrix is nilpotent,
    \item spectral: spectral radius is zero,
    \item matrix-theoretic: adjacency matrix is similar to a strictly upper triangular matrix.
\end{itemize}

\subsection{Illustrative Examples}

To complement the structural and algebraic characterizations above, we present
several small examples that illustrate the key concepts.

\begin{example}[A Simple DAG]
Consider the directed graph with vertex set $V = \{1,2,3,4\}$ and edges
$E = \{(1,2), (1,3), (2,4), (3,4)\}$. This graph is acyclic and admits the
topological ordering $(1,2,3,4)$.

\begin{center}
\begin{tikzpicture}[>=Stealth, node distance=1.6cm]
\node[circle, draw] (1) {1};
\node[circle, draw, right of=1] (2) {2};
\node[circle, draw, below of=2] (4) {4};
\node[circle, draw, below of=1] (3) {3};

\draw[->] (1) -- (2);
\draw[->] (1) -- (3);
\draw[->] (2) -- (4);
\draw[->] (3) -- (4);
\end{tikzpicture}
\end{center}

The adjacency matrix is strictly upper triangular after reordering, and $A^k = 0$
for all $k \ge 3$, confirming nilpotency.
\end{example}

\begin{example}[A Graph with One Nontrivial SCC]
Let $G$ have vertices $\{1,2,3,4\}$ and edges
$(1,2), (2,3), (3,1), (3,4)$. The vertices $\{1,2,3\}$ form a nontrivial SCC,
while $\{4\}$ is trivial.

\begin{center}
\begin{tikzpicture}[>=Stealth, node distance=1.6cm]
\node[circle, draw] (1) {1};
\node[circle, draw, right of=1] (2) {2};
\node[circle, draw, right of=2] (3) {3};
\node[circle, draw, below of=3] (4) {4};

\draw[->] (1) -- (2);
\draw[->] (2) -- (3);
\draw[->] (3) -- (1);
\draw[->] (3) -- (4);
\end{tikzpicture}
\end{center}

The condensation graph consists of two vertices, one representing the SCC
$\{1,2,3\}$ and one representing $\{4\}$, with a single edge between them. Since
the condensation graph is acyclic, all cycles in $G$ are contained within the
nontrivial SCC.
\end{example}

\begin{example}[Cycle Detection via Matrix Powers]
For the graph in the previous example, the adjacency matrix $A$ satisfies
$\operatorname{trace}(A^3) = 3$, corresponding to the three closed walks of
length $3$ in the cycle $1 \to 2 \to 3 \to 1$. This illustrates the use of matrix
powers in detecting cyclic structure \cite{horn-johnson, meyer2000}.
\end{example}

\subsection{Algorithmic Considerations and the Need for a Continuous Measure}

The structural characterizations of acyclicity discussed above correspond to
efficient algorithms widely used in graph theory and computer science.

\begin{itemize}
    \item \textbf{SCC decomposition} can be computed in linear time
    $O(|V| + |E|)$ using Tarjan's algorithm \cite{tarjan1972} or Kosaraju's
    algorithm.

    \item \textbf{Topological sorting} can also be computed in linear time using
    Kahn's algorithm \cite{kahn1962} or depth-first search.

    \item \textbf{Cycle detection} via DFS is linear, while matrix-based cycle
    detection using $\operatorname{trace}(A^k)$ is conceptually useful but
    computationally expensive for large graphs.

    \item \textbf{Spectral radius estimation} can be performed efficiently using
    power iteration or Lanczos methods, which require only sparse
    matrix-vector products \cite{chung}.
\end{itemize}

These algorithmic results highlight that many structural properties of directed
graphs can be computed at scale, a fact that will be important in the development
and validation of the DAG-ness measure.

Although acyclicity is a well--defined binary property, real-world networks often
exhibit intermediate behavior that is not captured by a simple yes/no test.

\begin{itemize}
    \item A graph may contain a single small feedback loop embedded in an
    otherwise hierarchical structure.

    \item A graph may have large SCCs but still exhibit strong directional flow
    outside those components.

    \item Spectral properties may indicate global recurrence even when local
    cycles are small or sparse.

    \item Hierarchical measures may be high even when a few back-edges violate
    perfect ordering.
\end{itemize}

Binary acyclicity collapses all such cases into the same category: ``not a DAG.''
This motivates the need for a continuous measure that captures the \emph{degree}
and \emph{nature} of deviation from acyclicity.

The equivalences established in this section show that acyclicity manifests
simultaneously in combinatorial, algebraic, spectral, and order-theoretic forms.
Each perspective highlights a different structural feature of directed graphs.

In Section~3, we introduce a multi-component framework that quantifies
DAG-like behavior along several orthogonal dimensions. These components are
designed to reflect the structural, spectral, and hierarchical properties developed
in this section, providing a continuous and interpretable measure of DAG-ness.
\section{Refined DAG-ness Framework: A 4-Dimensional Measure for Directed Acyclicity}

Let $G=(V,E)$ be a finite directed graph. While strict acyclicity is a binary property, real-world networks and complex dynamical systems often exhibit continuous degrees of DAG-like behavior. To quantify this, we introduce a continuous DAG-ness framework consisting of four mathematically orthogonal components. 

To ensure the framework remains structurally sound when analyzing systems containing symmetric relationships, we formally define a filtered graph $G_{>2}$. 

\textbf{Definition 3.1 (Filtered Graph).} Let $G=(V,E)$ be a directed graph. The filtered graph $G_{>2}$ is obtained by removing every mutually symmetric edge pair. That is, if both $(u,v) \in E$ and $(v,u) \in E$, both edges are removed from $E$. 

This formal filtering preserves all cycles of length 3 or greater while strictly removing 2-cycles. By evaluating the topological components $A(G)$ and $M(G)$ exclusively on $G_{>2}$, the framework forgives simple two-way diffusion while rigorously penalizing macroscopic feedback loops.

All four components take values in the interval $[0, 1]$, with larger values indicating greater DAG-likeness, and are designed to be scalable to large networks.

\subsection{Volume of Feedback (Acyclicity)}

The first component measures the fundamental structural presence of cyclic feedback. Rather than aggregating node-level participation, which can artificially inflate due to overlapping cycles, we utilize the Minimum Feedback Arc Set (MFAS). This quantifies the minimum intervention required to break all cycles of length 3 or greater. 

Because exact MFAS computation is NP-hard, we employ the scalable linear-time heuristic proposed by Eades, Lin, and Smyth (ELS). The volume of feedback, $A(G)$, is defined as:

$$A(G) = 1 - \frac{\text{MFAS}_{ELS}(G_{>2})}{|E_{>2}|}$$

This quantity equals 1 precisely when the macroscopic structure of the graph is entirely free of structural feedback loops.

\subsection{Alignment of Flow (Directedness)}

A graph may be perfectly acyclic but still lack the progressive hierarchical flow that defines a true DAG. The second component evaluates how well the raw network adheres to a macroscopic topological gradient. 

Let $\pi: V \rightarrow \mathbb{R}$ represent a global hierarchical ranking of the vertices. To ensure computational consistency and efficiency across the framework, we define $\pi$ as the linear ordering produced by the ELS heuristic utilized in Section 3.1. The alignment of flow, $F(G)$, measures the proportion of edges in the raw graph $G$ that successfully transition from a higher rank to a lower rank:

$$F(G) = \frac{|\{(u,v) \in E : \pi(u) < \pi(v)\}|}{|E|}$$

By evaluating the raw graph $G$, this component inherently penalizes symmetric 2-cycles, as information cannot simultaneously flow ``downhill'' in both directions across a symmetric edge pair.

\subsection{Locality of Feedback (Macro-Traps)}

The third component distinguishes between networks with minor, localized stuttering and networks containing catastrophic structural black holes. A strongly connected component (SCC) is a maximal vertex set where every pair is mutually reachable \cite{tarjan-scc}. Instead of measuring total cyclic mass, which is vulnerable to dilution by trivial SCCs, we isolate the single largest cyclic trap. Drawing inspiration from the phase transition and emergence of the ``giant component'' in classical random graph theory \cite{erdos-renyi}, our framework operates on the principle that the macroscopic acyclicity of a directed network is ultimately dictated by its single most massive feedback loop.

The locality of feedback, $M(G)$, is measured by penalizing the size of the largest SCC in the filtered graph:

$$M(G) = 1 - \frac{|V_{\text{max\_scc}}(G_{>2})| - 1}{|V_{>2}| - 1}$$

This formulation ensures that a DAG is mathematically recognized as only being as forward-flowing as its single largest cyclic trap, rendering the metric immune to dilution by peripheral acyclic nodes.

\textbf{Edge Case (DAGs).} 
If $G_{>2}$ contains no nontrivial strongly connected components, then 
$|V_{\max\_scc}(G_{>2})| = 1$, and we define $M(G) = 1$. 
This convention ensures that every directed acyclic graph attains the 
maximal locality-of-feedback score, consistent with the interpretation 
that a DAG contains no macroscopic cyclic traps.

\subsection{Pathway Complexity (Spectral Behavior)}

The final component acts as a dynamical tie-breaker, capturing the systemic chaos generated by pathway proliferation and undirected edges. Let $A$ be the adjacency matrix of the raw graph $G$. The spectral radius $\rho(A)$ provides a global indicator of recurrence. The pathway complexity, $S(G)$, is defined as:

$$S(G) = \frac{1}{1 + \rho(A)}$$

While the previous three dimensions analyze static topology, this component evaluates dynamic flow. A strictly one-way DAG possesses a nilpotent adjacency matrix and evaluates to exactly 1. Conversely, the introduction of symmetric 2-cycles yields real, non-zero eigenvalues, smoothly penalizing the score to reflect the increased dynamic complexity of back-and-forth diffusion.

The specific reciprocal normalization of $S(G)$ is chosen deliberately for its theoretical boundary conditions and asymptotic behavior. For a strict DAG, $\rho(A) = 0$, yielding $S(G) = 1.0$. For a simple directed cycle, $\rho(A) = 1$, yielding $S(G) = 0.5$. As the dynamical recurrence and pathway complexity of the network explode (e.g., in highly dense core-periphery structures), $\rho(A)$ grows continuously. The reciprocal form ensures that the penalty is bounded strictly within $[0, 1]$ while applying smoothly diminishing structural penalties for infinitely massive spectral radii.

\subsection{Composite DAG-ness Score}

The four components collectively isolate the volume of feedback, the hierarchical alignment of flow, the severity of macroscopic traps, and dynamical recurrence. Let $w = (w_A, w_F, w_M, w_S)$ be non-negative weights satisfying $w_A + w_F + w_M + w_S = 1$. The composite DAG-ness score is defined as:

$$D(G) = w_A A(G) + w_F F(G) + w_M M(G) + w_S S(G)$$

For general structural analysis, we adopt the uniform baseline weighting $w = (0.25, 0.25, 0.25, 0.25)$. The linear structure of $D(G)$ is designed intentionally so that future domain-specific applications may adjust the weights without altering the underlying orthogonal component definitions.

The formulation above is intentionally general. Different classes of real-world networks may exhibit characteristic 
patterns of cyclicity, hierarchy, or recurrence, and future work may therefore consider domain--specific or 
data--driven weight choices. The linear structure of $D(G)$ is designed to support such extensions 
without altering the underlying component definitions.
\section{Orthogonality and the Resolution of Component Redundancy}

The 4-dimensional framework introduced in Section 2 represents a significant structural departure from previous multi-component DAG-ness measures. In early formulations, such as the five-component framework previously explored by the author \cite{csikos2026}, DAG-ness was quantified by combining edge--level acyclicity, node-level cyclicity, macro-level cyclicity, hierarchical flow, and spectral recurrence into a single composite score. While that initial measure successfully demonstrated sensitivity to structural deviations and provided continuous values between 0 and 1, it suffered from mathematical entanglement, scalability limitations, and vulnerabilities to structural masking. The shift to a strictly orthogonal 4-dimensional framework resolves these theoretical limitations.

\subsection{Topological Redundancy and Collinearity}

A fundamental requirement for a multi-component metric is the orthogonality of its constituent parts. In the original framework, three of the five components were explicitly or implicitly tied to the presence of strongly connected components (SCCs) and cyclic edges \cite{csikos2026}. Specifically, the node-level cyclicity penalized the proportion of vertices in nontrivial SCCs, the macro--level cyclicity measured the fragmentation of the condensation DAG (which is derived directly from SCCs), and the hierarchy component penalized the proportion of edges residing on cycles \cite{csikos2026}.

Mathematically, these three components are highly collinear. If an edge forms part of a cycle, its endpoints must belong to a nontrivial SCC, and the presence of that SCC inherently reduces the number of nodes in the condensation graph \cite{bangjensen-gutin}. Consequently, a single cyclic intervention in the graph simultaneously inflates the penalty across the node, macro, and hierarchy dimensions, leading to a ``triple-counting'' of the exact same cyclic structures. 

The revised framework eliminates this redundancy by strictly separating the \textit{volume} of the feedback from its \textit{severity}. $A(G)$ quantifies the absolute minimum intervention required to eliminate all feedback loops using a heuristic minimum feedback arc set \cite{eades}, while $M(G)$ isolates the topological severity of the single largest cyclic trap. This guarantees that the components measure distinct graph properties without overlapping penalties.

\subsection{Scalability and the Trace--Exponential Walk Explosion}

To measure edge-level acyclicity, the original framework utilized a hybrid score that relied heavily on the trace--exponential surrogate $h(W) = Tr(e^{W \circ W}) - |V|$ \cite{csikos2026, zheng}. While this surrogate is differentiable and avoids the NP-hard computation of the exact minimum feedback arc set, it introduces a severe limitation when analyzing networks containing undirected (symmetric) relationships.

In algebraic graph theory, the trace of the $k$-th power of an adjacency matrix, $Tr(A^k)$, counts the exact number of closed walks of length $k$, not simple cycles \cite{meyer2000}. When an undirected edge is treated structurally as a pair of directed edges (a 2-cycle), it generates not only a closed walk of length 2, but phantom closed walks of lengths 4, 6, 8, and so on, as information bounces infinitely between the two vertices. Because the trace-exponential sums these traces, the penalty $h(W)$ explodes exponentially. This causes the composite score of simple undirected hierarchical trees to collapse toward 0, falsely categorizing them as highly cyclic non-DAGs.

By replacing the algebraic surrogate with the scalable Eades, Lin, and Smyth (ELS) heuristic \cite{eades} applied to the filtered graph $G_{>2}$, the new $A(G)$ component bypasses the walk explosion entirely. It operates in linear time $O(|V| + |E|)$ and accurately evaluates pure structural acyclicity.

In particular, undirected trees represented as symmetric digraphs now 
evaluate near $1$ under $D(G)$, rather than collapsing toward $0$ as in 
the trace--exponential formulation. This correction reflects the fact 
that symmetric diffusion alone does not constitute meaningful directed feedback.

\subsection{The Dilution Trap in Macro-Cyclicity}

In previous models, macro--acyclicity was defined by the ratio of trivial singleton SCCs to the total number of components in the condensation graph \cite{csikos2026}. This formulation is highly vulnerable to the ``Dilution Trap.'' 

Consider a directed network of 1000 nodes where 500 nodes are trapped in a single, catastrophic cyclic core, and the remaining 500 nodes are completely isolated or form simple forward--flowing chains on the periphery. The 500-node core collapses into 1 SCC, while the 500 peripheral nodes form 500 trivial singleton SCCs \cite{bangjensen-gutin}. Under the original formulation, the metric falsely reports a 99.8\% degree of macroscopic DAG-ness, completely masking the fact that 50\% of the system is locked in an inescapable feedback loop. 

The revised component $M(G)$ completely circumvents this by evaluating the Giant Component. By anchoring the penalty strictly to $|V_{\text{max\_scc}}|$, the network described above correctly incurs a 50\% penalty. This ensures the metric isolates the severity of the largest trap and remains immune to dilution by peripheral acyclic nodes.

\subsection{Disentangling Directedness from Acyclicity}

Finally, a true DAG is defined by both the absence of cycles and the presence of a progressive, hierarchical flow. The original framework defined hierarchy as the proportion of acyclic edges \cite{csikos2026}. However, this simply measures the proportion of acyclic edges, rendering it perfectly redundant with cycle--detection metrics. 

The new framework isolates directedness from acyclicity. While existing hierarchical metrics such as Trophic Coherence \cite{johnson2014} or Network Agony \cite{giechaskiel2020} provide powerful tools for evaluating systemic alignment, they are often mathematically entangled with the structural presence of cycles. By defining $F(G)$ strictly as the proportion of edges adhering to the acyclic topological gradient \cite{kahn1962}, the framework explicitly evaluates flow alignment independent of cycle detection. If a graph is perfectly acyclic but lacks a unified ``source-to-sink'' structure, $A(G)$ evaluates to 1 while $F(G)$ appropriately penalizes the lack of directional flow. This guarantees that $F(G)$ remains strictly orthogonal to the feedback components.

\subsection{Continuity of Empirical Behavior}

Although the internal definitions of the components have fundamentally changed to ensure mathematical orthogonality, the refined framework preserves the empirical ordering observed in the original study \cite{csikos2026}. In particular, the diagnostic graph families considered in that work (layered DAGs, layered DAGs with backedges, lollipop graphs, barbell graphs, and core-periphery graphs) appear in the exact same rank order under the refined $D(G)$ measure. 

Furthermore, the revised framework improves the theoretical behavior of the measure on complex mathematical systems, such as the Kaprekar graph and the Collatz graph. By separating the volume of acyclicity $A(G)$ from the dynamical recurrence $S(G)$, the new framework yields values that more accurately reflect the known structural and dynamical properties of these systems, successfully identifying them as massive, near-perfect directed trees pouring into isolated dynamical traps.

\subsection{Summary}

The refinements introduced in this paper preserve the conceptual intent of the original DAG-ness framework while replacing overlapping surrogates with definitions grounded in strictly orthogonal structural and dynamical properties. The resulting 4-dimensional measure is highly scalable, mathematically interpretable, immune to structural dilution, and remains fully compatible with the empirical narrative developed in \cite{csikos2026}. Finally, the linear composition of the framework remains entirely flexible, allowing for domain-specific or data-driven weight choices in future applied research without necessitating alterations to the underlying component definitions.
\section{Demonstration on Mathematical Graphs}

This section illustrates the behavior of the refined 4-dimensional DAG-ness framework on classical directed graphs arising from number-theoretic dynamical systems: the Kaprekar graph and the directed graph generated by the Collatz function. These examples highlight the ability of the orthogonal components to correctly diagnose networks that are topologically pure directed trees but contain isolated dynamical traps.

\subsection{Kaprekar's 6174 Graph}

The Kaprekar routine on four-digit numbers defines a directed map $K: \{0,\dots,9999\} \rightarrow \{0,\dots,9999\}$ by sorting the digits of $n$ in descending and ascending order, subtracting the latter from the former, and repeating. All nontrivial inputs eventually reach the fixed point 6174, which is the unique attractor of the system. The directed graph $G_{Kap}$ has 10,000 vertices and one outgoing edge per vertex. 

Under the orthogonal framework, $G_{Kap}$ evaluates as follows:
\begin{itemize}
    \item $A(G) = 1.000$: Removing exactly 1 edge breaks the single self-loop.
    \item $F(G) = 1.000$: The graph flows perfectly downhill into the attractor.
    \item $M(G) = 1.000$: The maximum cyclic trap (the self-loop) is only 1 node.
    \item $S(G) = 0.500$: The spectral radius of a 1-cycle is exactly 1.
\end{itemize}

\textbf{Composite Score:} $D(G_{Kap}) = 0.875$

The topological components ($A, F, M$) correctly recognize the structure as an essentially perfect macroscopic DAG. The penalty is isolated entirely to the pathway complexity dimension $S(G)$, reflecting the infinite dynamical recurrence occurring at the fixed-point attractor.

\subsection{The Collatz Graph}

The Collatz function $T: \mathbb{N} \rightarrow \mathbb{N}$ is defined by $T(n) = n/2$ if $n$ is even, and $3n+1$ if $n$ is odd. Restricting to the first 2228 positive integers yields a directed graph $G_{Col}$ with one outgoing edge per vertex. The only cyclic SCC is the well-known $1 \rightarrow 2 \rightarrow 4 \rightarrow 1$ cycle.

Under the orthogonal framework, $G_{Col}$ evaluates as follows:
\begin{itemize}
    \item $A(G) = 1.000$: Removing 1 edge breaks the 3-cycle.
    \item $F(G) = 1.000$: Only one edge must point uphill to close the terminal cycle.
    \item $M(G) = 0.999$: The largest trap is strictly 3 nodes ($1 - 2/2227$).
    \item $S(G) = 0.500$: The spectral radius of a simple 3-cycle is exactly 1.
\end{itemize}

\textbf{Composite Score:} $D(G_{Col}) = 0.875$

To understand why $S(G)$ evaluates to exactly $0.500$ despite the presence of only a single trivial cycle, we must consider the algebraic properties of the adjacency matrix $A$. The spectral radius $\rho(A)$ is determined by the maximum absolute eigenvalue of the system \cite{horn-johnson}. For a pure directed cycle of length $k$, the adjacency matrix acts as a permutation matrix, and its characteristic polynomial is $\lambda^k - 1 = 0$. 

For the 3-cycle in $G_{Col}$, the relevant submatrix is:
$$C_3 = \begin{pmatrix} 0 & 1 & 0 \\ 0 & 0 & 1 \\ 1 & 0 & 0 \end{pmatrix}$$
The roots of $\lambda^3 - 1 = 0$ are the cube roots of unity, all of which have an absolute magnitude of exactly $1$. 

Because the remaining 2,225 nodes form a strict directed tree feeding into this cycle, their corresponding submatrix is strictly nilpotent, yielding eigenvalues of exactly $0$. Consequently, the acyclic branches do not increase the maximum eigenvalue, leaving the global spectral radius at $\rho(A) = 1$. This yields:
$$S(G) = \frac{1}{1 + 1} = 0.500$$

This mathematical reality captures the crucial distinction between static topology and dynamic flow. Topologically, $G_{Col}$ is an overwhelming macroscopic DAG. Dynamically, however, any random walk on the graph will eventually fall into the terminal cycle and recur infinitely. The pathway complexity component isolates this dynamical entrapment perfectly, confirming that $G_{Col}$ and $G_{Kap}$ are fundamentally identical in their structural DAG-ness.

\subsection{The Modified Collatz Graph (Terminal Sink)}

To formally verify the boundary conditions of the framework, we construct a modified Collatz graph, $G_{Col}^*$, by defining the node $2$ as a terminal sink. By removing the $2 \rightarrow 1$ edge, the $1 \rightarrow 2 \rightarrow 4 \rightarrow 1$ loop is permanently severed. The graph becomes a pure collection of directed trees converging on a single endpoint.

Evaluating the modified graph $G_{Col}^*$:
\begin{itemize}
    \item $A(G) = 1.000$: The graph contains 0 cycles; the MFAS is 0.
    \item $F(G) = 1.000$: Every edge adheres strictly to a topological sort.
    \item $M(G) = 1.000$: There are no nontrivial strongly connected components.
    \item $S(G) = 1.000$: As a perfect directed tree, the adjacency matrix is nilpotent, yielding a spectral radius of 0.
\end{itemize}

\textbf{Composite Score:} $D(G_{Col}^*) = 1.000$

\medskip
This confirms that any mathematically pure directed acyclic graph evaluates to exactly $1.000$ under the continuous $D(G)$ framework.
To complement the qualitative analysis in Sections~5.1 and~5.2, we provide
a quantitative comparison of the DAG-ness components for the three
number-theoretic dynamical systems considered in this section: the
Kaprekar graph, the classical Collatz graph with its $1 \to 2 \to 4 \to 1$
cycle, and the modified Collatz variant. This table highlights how the
refined framework distinguishes between systems that are all nearly
acyclic at the macroscopic level but differ in their spectral and
macro-structural behavior.

\begin{table}[H]  % Requires \usepackage{float}
\centering
\begin{tabular}{lccccc}
\toprule
\textbf{Graph} & $A(G)$ & $F(G)$ & $M(G)$ & $S(G)$ & $D(G)$ \\
\midrule
Kaprekar (6174)           & 1.000 & 1.000 & 1.000 & 0.500 & 0.875 \\
Collatz (1--2--4 cycle)   & 1.000 & 1.000 & 0.999 & 0.500 & 0.875 \\
Modified Collatz          & 1.000 & 1.000 & 1.000 & 1.000 & 1.000 \\
\bottomrule
\end{tabular}

\caption{DAG-ness components for the Kaprekar graph, the classical
Collatz graph, and the modified Collatz variant. All values are normalized
to $[0,1]$.}
\label{tab:numbertheoretic-dagness}
\end{table}

\medskip
These results demonstrate that while all three systems exhibit nearly
perfect topological acyclicity, the refined components successfully capture
their distinct dynamical and macro-structural signatures, particularly in
the locality-of-feedback and spectral dimensions.
\section{Empirical Evaluation on Diagnostic Graphs}

To illustrate the diagnostic range and stability of the revised 4-dimensional framework, we evaluated it against the five synthetic graph families introduced in prior work. These diagnostic families isolate specific structural motifs---local cycles, global recurrence, dense strongly connected components (SCCs), hierarchical layering, and sparse back-edges---providing a controlled benchmark for continuous DAG-ness measures.

To ensure computational consistency and reproducibility, the experimental protocol from the original study was replicated. Each graph family was instantiated 20 times on $n=50$ vertices, with generation rules (such as edge density and cycle placement) subject to random variation. The four orthogonal components $A(G)$, $F(G)$, $M(G)$, and $S(G)$, along with the composite score $D(G)$, were computed for each instance. 

\subsection{Component-Level Results}

Table 1 reports the mean and standard deviation for the four components and the composite score across the 20 instances of each diagnostic family. The uniform baseline weighting $w = (0.25, 0.25, 0.25, 0.25)$ was applied.

\begin{table}[htbp]
\centering
\renewcommand{\arraystretch}{1.2}
\resizebox{\textwidth}{!}{%
\begin{tabular}{lccccc}
\hline
\textbf{Graph Family} & $\mathbf{A(G)}$ & $\mathbf{F(G)}$ & $\mathbf{M(G)}$ & $\mathbf{S(G)}$ & $\mathbf{D(G)}$ \\
\hline
Layered DAG + cycles & $0.983 \pm 0.01$ & $0.983 \pm 0.01$ & $0.957 \pm 0.01$ & $0.497 \pm 0.01$ & $\mathbf{0.855 \pm 0.01}$ \\
Layered DAG + backedges & $0.981 \pm 0.01$ & $0.971 \pm 0.01$ & $0.791 \pm 0.15$ & $0.409 \pm 0.05$ & $\mathbf{0.788 \pm 0.05}$ \\
Lollipop & $0.980 \pm 0.00$ & $0.980 \pm 0.00$ & $0.612 \pm 0.00$ & $0.500 \pm 0.00$ & $\mathbf{0.768 \pm 0.00}$ \\
Barbell SCCs + path & $0.837 \pm 0.03$ & $0.681 \pm 0.02$ & $0.720 \pm 0.01$ & $0.124 \pm 0.01$ & $\mathbf{0.591 \pm 0.01}$ \\
Core-periphery & $0.839 \pm 0.02$ & $0.664 \pm 0.01$ & $0.614 \pm 0.01$ & $0.080 \pm 0.00$ & $\mathbf{0.549 \pm 0.01}$ \\
\hline
\end{tabular}%
}
\caption{Refined 4-component DAG-ness values (mean $\pm$ std), averaged over 20 random instances of each diagnostic graph family.}
\label{tab:new_components}
\end{table}

The component-level results confirm the orthogonality and diagnostic clarity of the revised measure. The layered DAG with local cycles achieves near-perfect scores in acyclicity $A(G)$, flow $F(G)$, and macroscopic locality $M(G)$, losing points primarily in spectral complexity due to the presence of feedback. Conversely, the core-periphery and barbell graphs suffer massive penalties in $S(G)$ due to dense dynamical recurrence, while also showing severe structural degradation in alignment and acyclicity. 

The dataset beautifully highlights the separation of volume and locality. For example, the Lollipop graph requires almost zero edge removals to break its cycle ($A(G) = 0.980$), but the Giant Component penalty correctly flags the structural severity of that deterministic 20-node trap ($M(G) = 0.612$). In contrast, the Layered DAG with sparse back-edges exhibited a high standard deviation in its macroscopic penalty ($M(G) = 0.791 \pm 0.15$); this appropriately reflects the probabilistic nature of random back-edges, which sometimes create massive structural loops and other times create only minor localized traps.

\subsection{Comparative Analysis with Previous Framework}

To demonstrate the improvements of the orthogonal framework, Table 2 provides a direct comparison between the composite $D(G)$ scores of the original 5-component model and the revised 4-component model. 

\begin{table}[htbp]
\centering
\renewcommand{\arraystretch}{1.2}
\begin{tabular}{lcc}
\hline
\textbf{Graph Family} & \textbf{Original $D(G)$ } & \textbf{Revised $D(G)$} \\
\hline
Layered DAG + cycles & $0.932 \pm 0.129$ & $\mathbf{0.855 \pm 0.01}$ \\
Layered DAG + backedges & $0.495 \pm 0.159$ & $\mathbf{0.788 \pm 0.05}$ \\
Lollipop & $0.424 \pm 0.004$ & $\mathbf{0.768 \pm 0.00}$ \\
Barbell SCCs + path & $0.263 \pm 0.002$ & $\mathbf{0.591 \pm 0.01}$ \\
Core-periphery & $0.279 \pm 0.001$ & $\mathbf{0.549 \pm 0.01}$ \\
\hline
\end{tabular}
\caption{Comparison of empirical composite DAG-ness scores between the original 5-component framework and the revised 4-component framework.}
\label{tab:comparison}
\end{table}

The magnitude of the scores in the revised framework reflects a much more rigorous mathematical reality. In the original framework, the Lollipop graph was violently penalized ($D(G) \approx 0.424$) due to topological redundancy, where the single structural cycle triggered simultaneous, compounding penalties across node--level, macro--level, and hierarchical components. The revised framework correctly diagnoses it as predominantly DAG-like ($D(G) \approx 0.768$), isolating the penalty strictly to its single macroscopic trap and spectral properties. 

Furthermore, the revised framework exhibits significantly lower standard deviations across the randomized instances. By eliminating mathematically collinear components and brittle matrix-trace surrogates, the measure avoids the exponential score degradation caused by slight variations in local cycle density.
\section{Conclusion}

Quantifying the extent to which a directed network resembles a true DAG requires a measure that is both mathematically principled and empirically reliable. Previous attempts to define a continuous DAG-ness framework provided a critical conceptual foundation but suffered from topological redundancy. By relying on overlapping properties of strongly connected components and trace--exponential surrogates, prior frameworks were vulnerable to the Dilution Trap, exhibited high variance under probabilistic structural changes, and artificially compounded penalties on structurally distinct architectures \cite{csikos2026}.

The refined framework introduced in this paper resolves these theoretical limitations by establishing four strictly orthogonal dimensions of DAG-likeness: the volume of feedback $A(G)$, the alignment of flow $F(G)$, the macroscopic locality of feedback $M(G)$, and dynamical pathway complexity $S(G)$. By explicitly separating the absolute volume of cyclic edges from the topological severity of the Giant Component, the new measure avoids mathematical collinearity. Furthermore, evaluating spectral complexity strictly on the raw adjacency matrix while applying topological components to the filtered graph $G_{>2}$ allows the framework to successfully differentiate between simple symmetric diffusion and catastrophic cyclic traps.

Empirical evaluation on a suite of diagnostic graph families confirms that the refined 4-dimensional framework preserves the qualitative ordering of prior measures while drastically improving theoretical stability and interpretability. Additionally, deterministic evaluation on complex number-theoretic dynamical systems, such as the Kaprekar and Collatz graphs, demonstrates the framework's ability to accurately identify pure directed tree structures containing isolated dynamical attractors.

\end{document}